\definecolor{red}{rgb}{1,0,0}
\definecolor{lred}{rgb}{1,0.7,0.7}
\definecolor{lgreen}{rgb}{0.7,1,0.7}
\newcommand{\be}{\begin{equation}}
\newcommand{\ee}{\end{equation}}
\newcommand{\bea}{\begin{eqnarray}}
\newcommand{\eea}{\end{eqnarray}}
\newcommand{\nn}{\nonumber}
\newcommand{\tr}{{\rm tr}}
\renewcommand{\vec}[1]{\boldsymbol{#1}}
\newtheorem{theorem}{Theorem}
\newtheorem{definition}{Definition}
\newtheorem{lemma}{Lemma}
\begin{document}
\bibliographystyle{apsrev}

\title{A universal order parameter for synchrony in networks of limit cycle oscillators}

\author{Malte Schr\"oder}
\affiliation{Network Dynamics, Max Planck Institute for Dynamics and Self-Organization (MPIDS), 37077 G\"ottingen, Germany}

\author{Marc Timme}
\affiliation{Network Dynamics, Max Planck Institute for Dynamics and Self-Organization (MPIDS), 37077 G\"ottingen, Germany}
\affiliation{Network Dynamics, Technical University of Dresden, Institute for Theoretical Physics, 01062 Dresden, Germany}

\author{Dirk Witthaut}
\affiliation{Forschungszentrum J\"ulich, Institute for Energy and Climate Research -
	Systems Analysis and Technology Evaluation (IEK-STE),  52428 J\"ulich, Germany}
\affiliation{Institute for Theoretical Physics, University of Cologne, 
		50937 K\"oln, Germany}

\date{\today }

\begin{abstract}

We analyze the properties of order parameters measuring synchronization and phase locking in complex oscillator networks. First, we review network order parameters previously introduced and reveal several shortcomings: none of the introduced order parameters capture all transitions from incoherence over phase locking to full synchrony for arbitrary, finite networks. We then introduce an alternative, universal order parameter that accurately tracks the degree of partial phase locking and synchronization, adapting the traditional definition to account for the network topology and its influence on the phase coherence of the oscillators. We rigorously proof that this order parameter is strictly monotonously increasing with the coupling strength in the phase locked state, directly reflecting the dynamic stability of the network. Furthermore, it indicates the onset of full phase locking by a diverging slope at the critical coupling strength. The order parameter may find applications across systems where different types of synchrony are possible, including biological networks and power grids.
\end{abstract}

\maketitle

\begin{quotation}
Many dynamical system in physics, biology or engineering can be described as coupled phase oscillators, often in a network with a complex interaction topology. The prototypical model considered in this context are networks of Kuramoto oscillators. To study the synchronization in such systems several order parameters have been introduced, adapting the original Kuramoto order parameter, defined for all-to-all coupled oscillators, to complex interaction networks. However, none of the order parameters manages to fully track the transition from oscillators moving at their individual frequencies to full synchronization of the network. Here we propose a universal order parameter to study synchronization in finite networks of phase oscillators, tracking all stages of synchronization. This order parameter may be used to study systems where different stages of synchrony are relevant. Additionally, we rigorously proof several helpful qualities relating the order parameter not only to the synchrony but also to the  dynamical stability of the network. 
\end{quotation}

\section{Introduction}

Many oscillatory systems enter stable limit cycles as their dynamic steady state. If such systems are coupled, they often interact only through their positions along their periodic orbit, their phases. The simplest prototypical model to describe such coupled phase oscillators is the celebrated Kuramoto model \cite{Kura75,Stro00}. It characterizes the collective dynamics of a variety of phase oscillator systems ranging from chemical reactions \cite{Kura84} and neural networks \cite{Somp90,kirst16_dynamic} to coupled Josephson junctions \cite{Wies96}, laser arrays \cite{Vlad03}, optomechanical systems \cite{Hein11} and mean-field quantum systems \cite{14Witthaut,17Witthaut}.

Studies of the Kuramoto model and more general phase oscillator networks typically focus on the onset of synchronization between the individual oscillators \cite{Kura75,Stro00,Kura84,Aceb05,Doerfler14}. Starting from the analytical results for the mean field behavior in the all-to-all coupled Kuramoto model, correctly predicting the emergence of partial phase locking, extensions of this result to various network topologies were developed \cite{Timm06,06Bocaletti,07Gomez,08Arenas}. These extensions often use a similar methodology and define an adapted order parameter to analyze the transition to synchrony. Interestingly, none of these order parameters captures all transitions from the incoherent to the completely synchronized state for arbitrary, finite networks.

Depending on the application different states of phase ordering are relevant and a different order parameter is appropriate. Commonly, the onset of partial phase locking has received most interest \cite{Kura75,Stro00,Kura84}. For example, partial phase locking indicates the growth of number fluctuations in quantum mean-field models \cite{14Witthaut,17Witthaut}. In contrast, in technical systems such as power grids, a fully phase locked state is required for stable operation \cite{12powergrid,Dorf13,Mott13,16critical}.

We propose a universal order parameter that accurately reflects the phase coherence of phase oscillators in any network, describing the initial growth of partially phase locked clusters as well as the convergence to full synchrony. This order parameter is particularly suited to study the fully phase locked state as it directly reflects the dynamic stability of this steady state. It increases monotonically with the coupling strength, in contrast to previously defined mean field order parameters.

\begin{figure*}
\centering
\includegraphics[width = 0.9\textwidth]{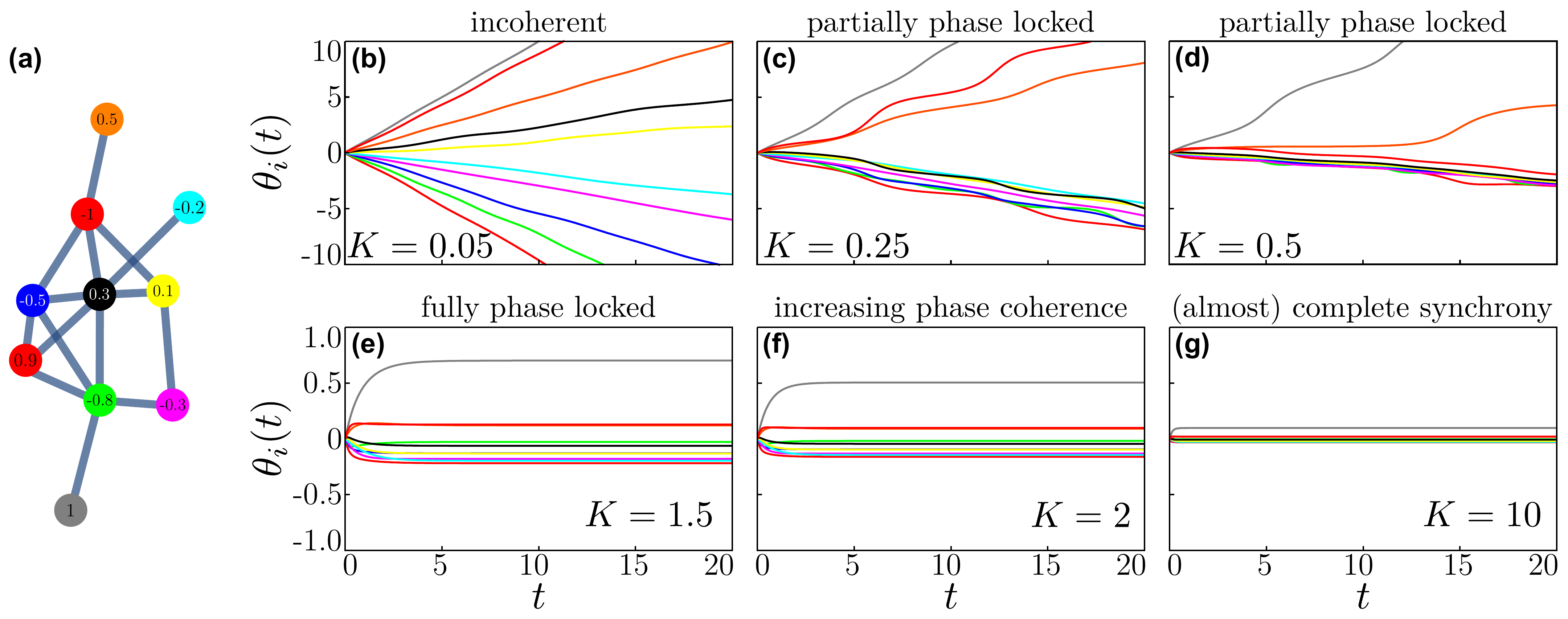}
\caption{\textbf{Synchronization in the Kuramoto model.}
Dynamics of the Kuramoto model for $N=10$ oscillators with a random interaction network. The phase coherence between neighboring oscillators increases with the coupling strength, eventually leading to full synchrony of all oscillators. (a) Topology of the interaction network, the numbers denote $\omega_i$ of the respective oscillator. (b) For small coupling the oscillators move (almost) independently with their individual frequencies (slope). (c,d) As the coupling strength increases beyond $K_{c1} = 0.1$ some oscillators enter a partially phase locked state and their phases evolve with the same time-averaged frequency. (e) If the coupling strength becomes larger than $K_{c2} = 1$, all nodes are phase locked and move with the same constant frequency $\frac{d \theta_i}{dt} = 0$. (f,g) Further increasing the coupling strength reduces the phase differences of the oscillators until complete synchrony $\theta_i - \theta_j = 0$ is achieved for $K \rightarrow \infty$.}
\label{fig:kuramoto_example}
\end{figure*}

\section{Phase oscillators and the Kuramoto model}

Limit cycles are ubiquitous as dynamically stable states in a wide range of systems. When such systems are coupled, interactions can typically be approximated as interactions between their phases $\theta_i$. The Kuramoto model
\be
  \frac{d \theta_i}{dt} = \omega_i + 
     K \sum_{j=1}^N A_{i,j}  \sin(\theta_j - \theta_i)
     \label{eqn:kuramoto-intro}
\ee
is one of the simplest models for such coupled phase oscillators. It describes the dynamics of $N$ oscillators with natural frequencies $\omega_i$ and sinusoidal coupling. The parameter $K$ denotes the coupling strength of the interactions and $A_{i,j} \in \left\{0,1\right\}$ is the adjacency matrix of the interaction network, describing which nodes interact with which other nodes. The results easily extend to inhomogeneous coupling strengths with $A_{i,j} \in \mathbb{R}$. In many applications, interactions between individual oscillators are reciprocal and in the following we assume an undirected network, i.e., a symmetric adjacency matrix $A_{i,j} = A_{j,i}$. Similarly, we can without loss of generality consider a co-rotating frame such that the natural frequencies of the oscillators are centered around $0$ and we have $\sum_i \omega_i = 0$, where the sum runs from $1$ to $N$. In the following we only consider connected networks, as otherwise we can treat the connected sub-systems individually.

The dynamics of coupled Kuramoto oscillators depends strongly on the strength $K$ of the interactions. For small coupling $K$ all oscillators rotate (almost) independently with their natural frequencies $\omega_j$. In this state the phases are \emph{incoherent}. Above some critical coupling strength $K \ge K_{c1}$, a subset of the oscillators starts to synchronize such that their time averaged frequencies $\left< \frac{d \theta_i}{dt} \right>_t$ become identical. The phases of these oscillators then move together in a \emph{partially phase locked} state and their phase differences $\theta_i - \theta_j$ are bounded. When the coupling becomes even stronger, $K \ge K_{c2}$, a \emph{fully phase locked} state appears in a saddle-node bifurcation \cite{14bifurcation}. All oscillators synchronize to a common frequency $\frac{d \theta_i}{dt} = \mathrm{const.} = 0$ and the phase differences between all nodes become constant $\theta_i - \theta_j = \mathrm{const}$. Further increasing the coupling reduces the phase differences until \emph{complete synchronization} of the oscillators, defined by $\theta_i - \theta_j = 0$, is achieved as $K \rightarrow \infty$. This behavior is illustrated in Fig.~\ref{fig:kuramoto_example} showing the dynamics of a small random network of oscillators for various coupling strengths.

Most studies focus on the transition from incoherent oscillators moving at their individual frequencies to a partially phase locked state \cite{Kura75,Stro00,Kura84,Aceb05,Doerfler14}. In a variety of technical systems, however, partial phase coherence is not sufficient for stable function. For instance, Kuramoto-like dynamics appear in a second order model describing the frequency dynamics of power grids \cite{12powergrid,12braess,13nonlocal,14bifurcation,schafer2015decentral,16critical,Dorf13,Mott13}:
\begin{equation}
	M_i \frac{\mathrm{d}^2\theta_i}{\mathrm{d}t^2} - D_i \frac{\mathrm{d}\theta_i}{\mathrm{d}t}
	 = P_i  + \sum_{j = 1}^{N} K A_{i,j}  \sin(\theta_j - \theta_i).
	 \label{eq:powergrid}
\end{equation}
Here, $M_i$ is the inertia, $D_i$ the damping coefficient and $P_i$  the power injection at node $i$. The phases $\theta_i(t)$ describe the state of rotating machines (generators or motors) and the coupling their interactions via power transmission lines. In the steady state $\frac{\mathrm{d}\theta_i}{\mathrm{d}t} = 0$, required for stable operation of the power grid, all machines work at the same frequency. This state is characterized by the same equations that describe a fully phase locked state in the Kuramoto model. The stability of this state and how the phase cohesiveness in the network scales with the coupling strength is an important question \cite{Doerfler10}.

Ideally, a universal order parameter would be able to characterize both the transition to partial as well as to full phase locking and the properties of a phase locked state in arbitrary, especially finite networks.

\section{Kuramoto order parameters}

To quantitatively study the transitions from an incoherent to a fully synchronous state one typically introduces an order parameter to measure the phase coherence. For the original all-to-all coupling model, Kuramoto introduced the complex order parameter \cite{Kura84,Stro00}
\be
      r(t)e^{\text{i}\psi(t)} = \frac{1}{N} \sum_{i=1}^N e^{\text{i}\theta_i} \,,
      \label{eqn:def-order-r}
\ee
where $\psi(t)$ describes the average phase of all oscillators and $r(t)$ the degree of phase coherence.
A single measure for the phase ordering is the given by the long time average of the absolute value of the order parameter
\bea
      r^2_\mathrm{Kuramoto} &=& \left< \left|r(t)e^{\text{i}\psi(t)}\right|^2 \right>_t = \left< r(t)^2 e^{\text{i}\psi(t)} e^{-\text{i}\psi(t)}\right>_t \nonumber\\
      &=& \left< \frac{1}{N^2} \sum_{i,j = 1}^N e^{\text{i}(\theta_i - \theta_j)} \right>_t \nonumber\\
      &=& \left< \frac{1}{N^2} \sum_{i,j = 1}^N \cos(\theta_i - \theta_j) \right>_t\,.
      \label{eqn:def-order-r2}
\eea
This order parameter measures the average of the phase differences of all pairs of oscillators. If the oscillators are incoherent, the time average vanishes and the order parameter is $0$. When a fraction of the oscillators are partially phase locked the cosine of their phase differences becomes positive and does not disappear in the time average; the order parameter becomes positive.

In the original case for $N$ all-to-all coupled oscillators with natural frequencies $\omega_i$ following a distribution $g(\omega)$, mean-field theory correctly predicts the transition to partial phase coherence at the critical coupling $K_{c1} = 2/\left[\pi g(0)\right]$ if the frequency distribution $g$ is unimodal and symmetric around zero. For larger coupling strengths $K>K_{c1}$ the order parameter then grows continuously as $r(K) \propto \sqrt{1-K_{c1}/K}$ \cite{Stro00}. As such, this order parameter characterizes the transition from an \emph{incoherent} to a \emph{partially} phase locked state. 

This original order parameter is clearly unsuited when studying more general interaction networks. One would compare the phases of two oscillators in the network that are only interacting indirectly via a (possibly very long) chain of intermediate oscillators. As such, several adaptations of the order parameter have been introduced to study the effect of the network topology on the synchronization of Kuramoto oscillators:

The first definition used by Restrepo et al. \cite{05Restrepo,06Restrepo,08Arenas} considers an intuitively defined local order parameter
\be
	r_i = \left|\sum_{j=1}^N  A_{i,j} \left<e^{\text{i}\theta_j}\right>_t\right|
\ee
for oscillator $i$, measuring the phase coherence of all neighboring oscillators. A global order parameter is then easily defined as the average of the local order parameters
\be
	r_\mathrm{net} = \frac{\sum_{i=1}^N r_i}{\sum_{i=1}^N k_i},
\ee
where $k_i$ is the degree of node $i$. 

A second definition \cite{04Ichinomiya,06Bocaletti} adapts the original order parameter Eq.~(\ref{eqn:def-order-r}) weighting each node with its degree
\be
	r_\mathrm{mf} = \left<\left|\frac{\sum_{i=1}^N k_i e^{\text{i}\theta_i}}{\sum_{i=1}^N k_i}\right|\right>_t .
\ee
This order parameter ignores the specific network topology in favor of a mean-field view of network ensembles to simplify analytical calculations.

Finally, a definition of an order parameter to study local synchronization used in \cite{07Gomez} derives from the original order parameter Eq.~(\ref{eqn:def-order-r2}), restricting it to the network topology and only averaging over the phase differences between directly connected nodes
\be
	r_\mathrm{link} = \frac{1}{\sum_{i=1}^N k_i} \sum_{i,j = 1}^N A_{i,j} \left| \left< e^{\text{i} \left(\theta_i - \theta_j\right)} \right> _t \right| \,.
\ee

The above order parameters work well for their respective use, for example to study synchronization analytically in mean-field network models. However, none of them accurately captures the whole transition to synchronization, especially in smaller networks. We illustrate this in Fig.~\ref{fig:network_order_params} for a small random network: While $r_\mathrm{net}$ clearly captures the transition to full phase locking at $K_{c2}=1$, it is effectively $0$ before full phase locking becomes stable and does not indicate where individual nodes enter the partially phase locked state for $K < 1$. Conversely, $r_\mathrm{link}$ describes these transitions but cannot cover the convergence to full synchrony as  $r_\mathrm{link} = 1$ in the fully phase locked state, regardless of the network topology. Finally, $r_\mathrm{mf}$ works well to describe the behavior for a large ensemble of networks, but is clearly unsuited for use with specific, particularly small, networks as it ignores the specific network structure and is large already for weak coupling. It is easy to construct further examples where, for instance, the mean field order parameter $r_\mathrm{mf}$ is non-monotonous with respect to the coupling strength $K$, even in the fully phase locked state.

\begin{figure*}
\centering
\includegraphics[width = 0.7\textwidth]{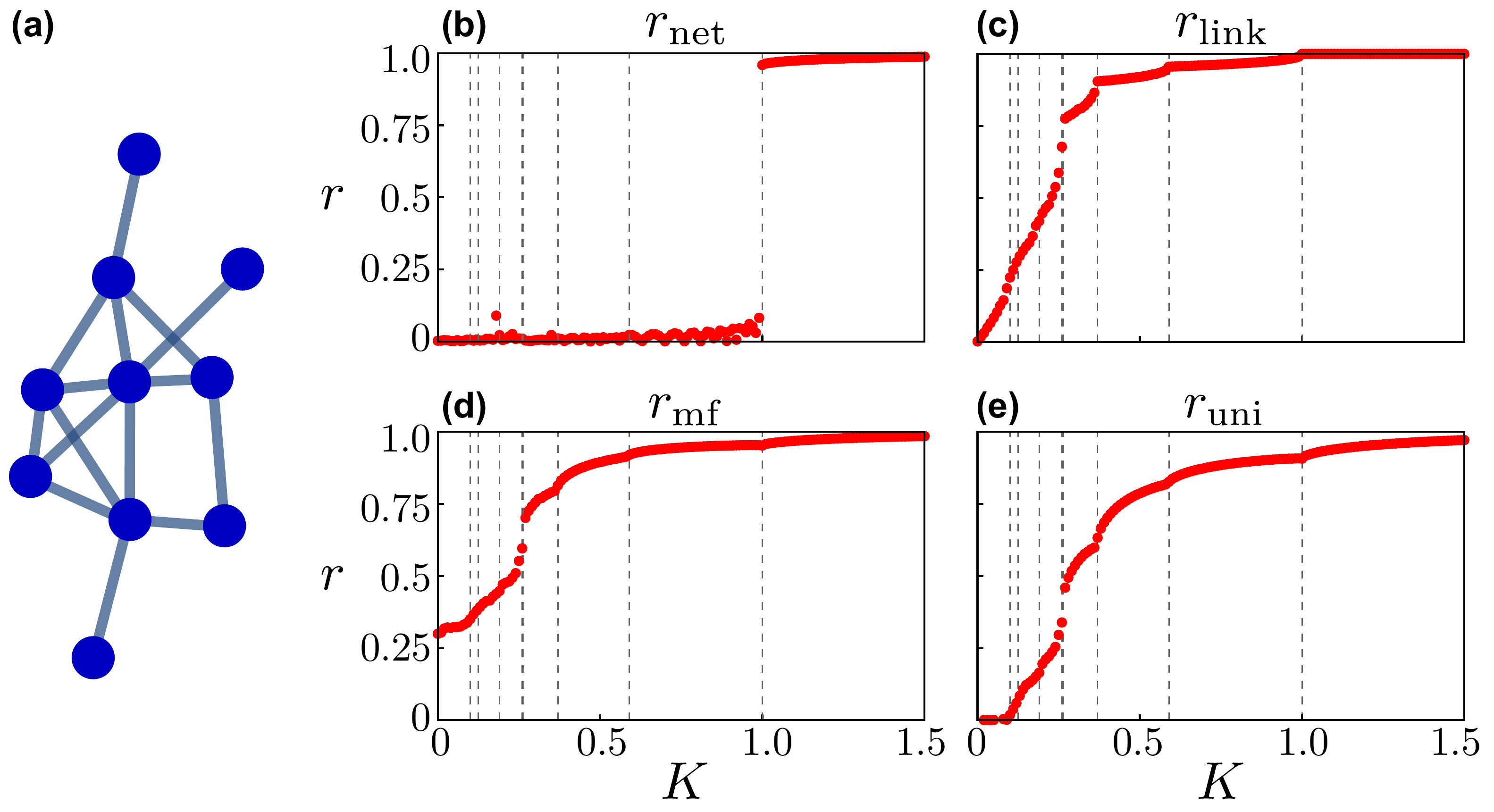}
\caption{\textbf{Order parameters to measure phase coherence in networks.}
Different order parameters measuring the phase coherence in complex networks of Kuramoto oscillators, describing the transition from a completely incoherent state [$K=0$, cf. Fig.~\ref{fig:kuramoto_example}(b)] to full synchrony [$K\rightarrow \infty$, cf. Fig.~\ref{fig:kuramoto_example}(g)]. None of the order parameters used in the literature $r_\mathrm{net}$, $r_\mathrm{mf}$ and $r_\mathrm{link}$ captures all transitions. (a) Topology of the interaction network, cf. Fig.~\ref{fig:kuramoto_example}(a). (b) $r_\mathrm{net}$ is almost $0$ until the fully phase locked state becomes stable at $K=1$. It fails to capture transitions in the partially phase locked regime. (c) In contrast, $r_\mathrm{link}$ captures the transitions in the partially phase locked regime well. However, $r_\mathrm{link} = 1$ in the fully phase locked state for $K \ge 1$ and does not capture the convergence to complete synchrony. (d) $r_\mathrm{mf}$ measures globally averaged phase coherence. It fails to accurately represent the incoherent and partially phase locked state with respect to the actual network topology, especially for small networks. (e) Our universal order parameter $r_\mathrm{uni}$ accurately reflects the degree of phase coherence in all stages of synchronization. All results show the long time limit of the order parameter starting from identical initial conditions $\theta_i = 0$, the black dashed lines mark transitions where single nodes enter a (partially) phase locked state.}
\label{fig:network_order_params}
\end{figure*}

\section{A universal order parameter for complex networks}

In order to have both a practically applicable and relevant order parameter as well as describe the whole evolution from an incoherent state to complete synchronization we propose a universal network order parameter:
\begin{definition}
	Given a network of coupled Kuramoto oscillators Eq.~(\ref{eqn:kuramoto-intro}), phase ordering is measured by
\bea
	r_\mathrm{uni} &=& \frac{1}{\sum_{i=1}^N k_i} \sum_{i,j=1}^N A_{i,j} \left< \Re\left( e^{\text{i}\left(\theta_i - \theta_j\right)} \right)\right>_t \nonumber\\
				   &=& \frac{1}{\sum_{i=1}^N k_i} \sum_{i,j=1}^N A_{i,j} \left< \cos\left(\theta_i - \theta_j\right) \right>_t . \label{eqn:order_rho}
\eea
\end{definition}

As $r_\mathrm{link}$ this definition respects the topology of the interaction network and considers only phase differences between neighboring nodes. In contrast to $r_\mathrm{link}$, the definition of $r_{\rm uni}$ reduces to the original Kuramoto order parameter Eq.~(\ref{eqn:def-order-r}) for a completely connected network as desired. Figure~\ref{fig:network_order_params}(d)  illustrates the behavior in comparison to the other network order parameters, showing that it accurately captures the transitions in all stages of phase locking (cf. Fig.~\ref{fig:network_order_table}).

\subsection{Synchronization and stability}

The order parameter $r_{\rm uni}$ gives a full account of the emergence of synchrony. It accurately follows both the transitions to partially and fully phase locked states as well as the convergence to complete synchrony. 

We illustrate this central result in Fig.~\ref{fig:network_order_params} for a small random network. Whenever one of the nodes enters a partially phase locked state we observe a strong kink in $r_{\rm uni}(K)$. Hence, we can directly track the growth of phase locked clusters. In fact, the slope $\mathrm{d} r_{\rm uni}/\mathrm{d} K$ diverges when approaching these transition points from the right. We rigorously proof this result for the transition to full phase locking below (cf. Theorem 1).

The universal order parameter has further advantages compared to the alternatives discussed above. First, $r_{\rm uni}$ quantifies the dynamical stability of a phase-locked steady state (cf. Theorem 2).
This becomes most apparent in a ring of $N$ oscillators with identical natural frequencies, $\omega_i = 0$ for all $i \in \left\{1,2,\dots,N\right\}$, where all interactions have identical coupling strength $K = 1$. Clearly, in a fully phase locked state all phase differences between neighboring nodes need to be identical while the cumulative phase difference around the ring must be a multiple of $2 \pi$ \cite{17Manik,delabays2016multistability}. Under these conditions we can characterize the phase locked states by a mode $m$ describing the total phase change around the ring $2 \pi m$. The individual phases are then given by 
\be
   \theta_i^* = \frac{2 \pi i m}{N} \,
\ee 
with $m \in \left\{ -N/2, -N/2 + 1, \dots,N/2 \right\}$, illustrated for $m \ge 0$ in Fig.~\ref{fig:order_stability}. Here and in the following we use an asterisk $*$ to denote a phase locked steady state $\theta_i^*$ of the Kuramoto model Eq.~(\ref{eqn:kuramoto-intro}).

The phase locked states with $\left|\theta_i^* - \theta_{i-1}^* \right| < \pi / 2$, that means $m\in (-N/4,N/4)$, are linearly stable, the remaining states are unstable. Our order parameter $r_{\rm uni}$ reflects the linear stability of these different steady states - the state with perfectly aligned phases ($m=0$) is most stable and has $r_{\rm uni}= 1$. All other states have larger phase differences, which impede dynamical stability, and consequently lower values of $r_{\rm uni}$. This information is completely lost for the alternatives $r_{\rm link}$ and $r_{\rm mf}$, the first one being identically one for all phase-locked states and the second one being one for the fully aligned state and zero otherwise.

The classification of stability is due to the fact that $r_\mathrm{uni}$ Eq.~(\ref{eqn:order_rho}) counts only the phase differences in the stable region as positive contributions, i.e., when \mbox{$\left| \theta_i^* - \theta_j^* \right| < \pi/2$}. As the stability of phase locked state is directly related to these phase differences, with phase differences close to $0$ corresponding to more stable states, the order parameter directly reflects the systems stability of any phase locked state, relevant for example for applications to power grids. 

A further advantage of $r_{\rm uni}$ for the analysis of phase-locked states is monotonicity (cf. Theorem 2). Intuitively we expect that an increase of the coupling $K$ leads to a stronger alignment of the phases and thus to an increase of the order parameter. This expectation can be violated for the mean-field order parameter $r_{\rm mf}$,  as it  measures global alignment, but an increase of the coupling acts only locally on the links. In contrast, we rigorously proof below that the order parameter $r_{\rm uni}$ is monotonic in the coupling strength $K$ for a phase-locked steady state.

\begin{figure}
\centering
\includegraphics[width = 0.45\textwidth]{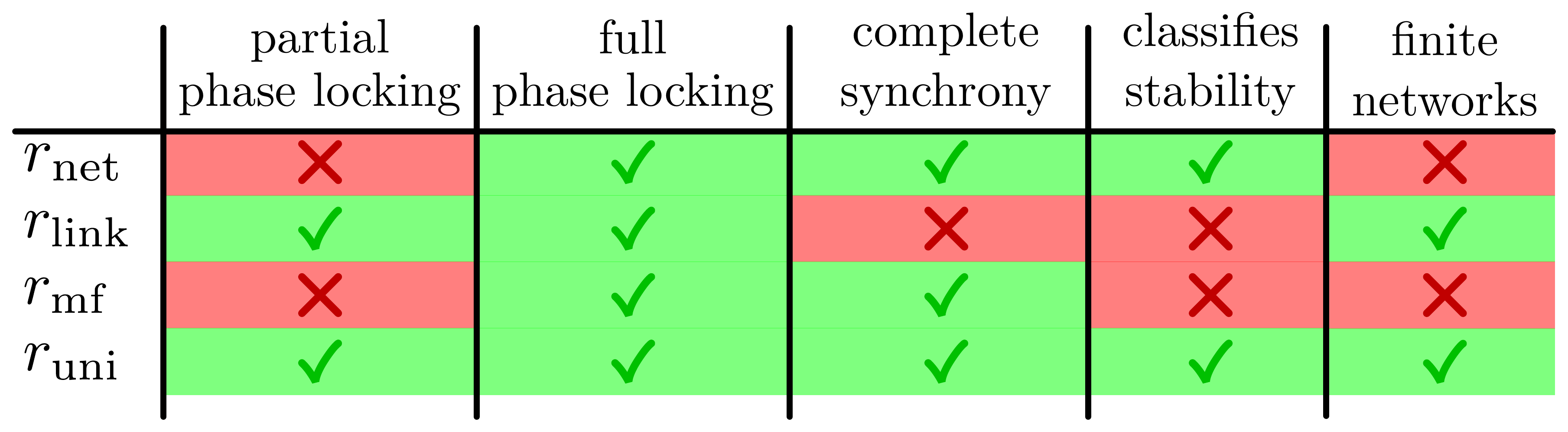}
\caption{\textbf{A universal order parameter.}
None of the order parameters used in the literature $r_\mathrm{net}$, $r_\mathrm{mf}$ and $r_\mathrm{link}$ capture all transitions. Following the observations in Fig~\ref{fig:network_order_params}, $r_\mathrm{net}$ fails to capture transitions in the partially phase locked regime. It also fails to describe phase coherence for some small networks, most easily seen for just two connected oscillators. $r_\mathrm{link}$ does not capture the transition to complete synchrony and, since $r_\mathrm{link} = 1$ in the fully phase locked state, it does not classify stability. $r_\mathrm{mf}$ does not reflect the phase ordering in networks for partially or fully phase locked states, since it measures global phase coherence. As such it does not represent stability of the phase locked steady states which depends on local phase differences and is not suited for small networks. The order parameter $r_\mathrm{uni}$ accurately reflects the transitions for all stages of synchronization and correctly classifies stability of different phase locked states in arbitrary, even small networks.}
\label{fig:network_order_table}
\end{figure}

\subsection{Analytical results}

To formalize these observations, first consider the linear stability of a phase locked state $\vec \theta^*$ for $K \ge K_{c,2}$:
A small perturbation $\vec \xi$ around the steady state, $\theta_j = \theta_j^* + \xi_j$, evolves as    
\be
  \frac{\mathrm{d}}{\mathrm{d}t} \vec \xi = \vec J \vec \xi + \mathcal{O}(\vec \xi^2), \label{eqn:linear_stability}
\ee
where we make use of vector notation $\vec \xi = (\xi_1,\ldots,\xi_N)^T$. The Jacobian matrix $\vec J$ quantifies the linear stability of a phase-locked steady state. It always has one trivial eigenvalue $\lambda_1 = 0$ with eigenvector $\vec v_1 = \left(1,1,\ldots,1\right)^\mathrm{T}$, representing a global uniform shift of all phases which does not affect the phase-locking of the nodes. In a stable phase locked state all other eigenvalues are negative $0 > \lambda_2 \ge  \lambda_3 \ge \cdots \ge \lambda_N$. We denote the associated eigenvectors as $\vec v_{2}, \ldots, \vec v_N$.

We can then formalize the above observations about $r_{\rm uni}$ in the following theorems:

\begin{theorem}
Given a network of coupled Kuramoto oscillators Eq.~(\ref{eqn:kuramoto-intro}) with $\sum_i \omega_i = 0$ and $\vec \omega \cdot \vec v_2 \neq 0$, the derivative of the order parameter $r_\mathrm{uni}$ Eq.~(\ref{eqn:order_rho}) diverges when the fully phase locked state becomes unstable at the critical coupling $K_{c2}$
\be 
	\mathrm{d}r_\mathrm{uni}/\mathrm{d}K \rightarrow \infty \quad\mathrm{for} \quad K \rightarrow K_{c2}^+\,. \nn
\ee
\end{theorem}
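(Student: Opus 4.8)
The plan is to exploit the fact that in a fully phase locked state the dynamics is frozen in the co-rotating frame, $\mathrm{d}\theta_i/\mathrm{d}t=0$, so the time average in Eq.~(\ref{eqn:order_rho}) is trivial and $r_{\rm uni}$ reduces to a smooth function of the static phase vector $\vec\theta^*(K)$. The steady state satisfies $\omega_i = K\sum_j A_{i,j}\sin(\theta_i^*-\theta_j^*)$, and the Jacobian $\vec J$ of the right hand side of Eq.~(\ref{eqn:kuramoto-intro}) at $\vec\theta^*$ is the weighted graph Laplacian $\vec J = -K\vec L$ with $L_{i,j}=-A_{i,j}\cos(\theta_i^*-\theta_j^*)$ for $i\neq j$ and vanishing row sums. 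By definition the fully phase locked branch terminates at $K_{c2}$ in a saddle node bifurcation, where the largest nontrivial eigenvalue vanishes, $\lambda_2\to 0^-$. I would therefore aim to show that the divergence of $\mathrm{d}r_{\rm uni}/\mathrm{d}K$ is driven entirely by this vanishing eigenvalue.

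The next step is implicit differentiation of the branch. Differentiating the steady state condition with respect to $K$ and using the steady state relation to eliminate the $\sin$ term yields the linear system $\vec J\,\vec\eta = \vec\omega/K$, where $\vec\eta := \mathrm{d}\vec\theta^*/\mathrm{d}K$. Since $\vec J$ is symmetric with null vector $\vec v_1\propto(1,\ldots,1)^{\mathrm T}$ and $\sum_i\omega_i=0$ forces $\vec\omega\perp\vec v_1$, this system is solvable on the orthogonal complement, and expanding in the eigenbasis gives
\be
  \vec\eta = \frac{1}{K}\sum_{n=2}^N \frac{\vec v_n\cdot\vec\omega}{\lambda_n}\,\vec v_n. \nn
\ee
I would then differentiate $r_{\rm uni}$ directly. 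Writing $\mathrm{d}r_{\rm uni}/\mathrm{d}K = -(\sum_i k_i)^{-1}\sum_{i,j}A_{i,j}\sin(\theta_i^*-\theta_j^*)(\eta_i-\eta_j)$ and antisymmetrizing the double sum, each half collapses via the steady state relation $\sum_j A_{i,j}\sin(\theta_i^*-\theta_j^*)=\omega_i/K$ to $\vec\eta\cdot\vec\omega/K$. Combining with the expression for $\vec\eta$ produces the compact formula
\be
  \frac{\mathrm{d}r_{\rm uni}}{\mathrm{d}K}
   = -\frac{2}{K^2\sum_i k_i}\sum_{n=2}^N \frac{(\vec v_n\cdot\vec\omega)^2}{\lambda_n}. \nn
\ee

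Finally, because every $\lambda_n<0$ on the stable branch, each term contributes positively to $\mathrm{d}r_{\rm uni}/\mathrm{d}K$ (which incidentally also foreshadows the monotonicity of Theorem~2), and as $K\to K_{c2}^+$ the $n=2$ term $(\vec v_2\cdot\vec\omega)^2/\lambda_2\to-\infty$ dominates. Under the hypothesis $\vec\omega\cdot\vec v_2\neq 0$ its numerator stays bounded away from zero, so with the overall negative prefactor the derivative tends to $+\infty$, as claimed. I expect the main obstacle to be the rigorous justification that $\vec\theta^*(K)$ is a differentiable branch all the way down to $K_{c2}$: the implicit function theorem applies for every $K>K_{c2}$, where $\vec J$ is invertible on $\vec v_1^\perp$, but one must argue that the branch and its derivative extend continuously as $\lambda_2\to 0$ and that no other eigenvalue reaches zero first, so that $\lambda_2$ alone controls the blow up.
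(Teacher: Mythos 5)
Your proposal is correct and takes essentially the same approach as the paper: implicit differentiation of the phase-locked branch, inversion of the singular Jacobian on the complement of $\vec v_1$ via the pseudo-inverse, the identical key formula $\mathrm{d}r_{\rm uni}/\mathrm{d}K = \frac{2}{K^2\sum_i k_i}\sum_{n=2}^N (\vec v_n\cdot\vec\omega)^2/(-\lambda_n)$, and divergence driven by $\lambda_2\to 0^-$ under the hypothesis $\vec\omega\cdot\vec v_2\neq 0$. Your only deviation is cosmetic (you apply the steady-state relation before inverting, collapsing the right-hand side to $\vec\omega/K$, whereas the paper carries the $\sin$ terms through and simplifies at the end), and the regularity caveat you flag about the branch near $K_{c2}$ is one the paper's own proof leaves implicit as well.
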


\begin{theorem}
Given a network of coupled Kuramoto oscillators Eq.~(\ref{eqn:kuramoto-intro}) with $\sum_i \omega_i = 0$, in a fully phase locked regime $K > K_{c2}$ the order parameter $r_\mathrm{uni}$ Eq.~(\ref{eqn:order_rho}) is strictly larger than zero for every stable phase locked state and increases monotonically with increasing $K$.
\end{theorem}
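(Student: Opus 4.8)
The plan is to prove both assertions by expressing $r_\mathrm{uni}$ directly through the spectrum of the Jacobian $\vec J$ of Eq.~(\ref{eqn:linear_stability}), exploiting that the very same cosines $\cos(\theta_i^*-\theta_j^*)$ appear in $r_\mathrm{uni}$ and in $\vec J$. In a phase locked state the phases are constant, so the time averages in Eq.~(\ref{eqn:order_rho}) are trivial and $r_\mathrm{uni} = (\sum_i k_i)^{-1}\sum_{i,j}A_{ij}\cos(\theta_i^*-\theta_j^*)$.

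\emph{Positivity.} The Jacobian has off-diagonal entries $J_{ij}=K A_{ij}\cos(\theta_i^*-\theta_j^*)$ and diagonal entries $J_{ii}=-K\sum_j A_{ij}\cos(\theta_i^*-\theta_j^*)$, so its trace is
\be
  \tr\vec J = -K\sum_{i,j=1}^N A_{ij}\cos(\theta_i^*-\theta_j^*) = -K\Big(\sum_i k_i\Big)\,r_\mathrm{uni}\,. \nn
\ee
Using $\tr\vec J=\sum_{k=1}^N\lambda_k=\sum_{k=2}^N\lambda_k$ (since $\lambda_1=0$), I obtain $r_\mathrm{uni}=\frac{1}{K\sum_i k_i}\sum_{k=2}^N|\lambda_k|$. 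In a stable phase locked state every $\lambda_k<0$ for $k\ge 2$, while $K>0$ and $\sum_i k_i=2|E|>0$ for a connected network; hence $r_\mathrm{uni}>0$ at once. This already displays $r_\mathrm{uni}$ as a normalized measure of the stability margins $|\lambda_k|$.

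\emph{Monotonicity.} I would differentiate along the stable branch $\vec\theta^*(K)$. Because $\vec J$ is invertible on $\mathrm{span}(\vec v_1)^\perp$ (guaranteed by stability, $\lambda_k\neq 0$ for $k\ge 2$), the implicit function theorem yields a smooth branch after quotienting the uniform-shift gauge, and a well-defined $\vec\theta':=\mathrm{d}\vec\theta^*/\mathrm{d}K$. Differentiating the fixed-point relation $\omega_i+K\sum_j A_{ij}\sin(\theta_j^*-\theta_i^*)=0$ in $K$ and resubstituting it gives the linear-response equation $\vec J\,\vec\theta'=\vec\omega/K$; since $\sum_i\omega_i=0$ means $\vec\omega\perp\vec v_1$, this is solvable with $\vec\theta'=\frac{1}{K}\vec J^{+}\vec\omega$, where $\vec J^{+}$ is the pseudoinverse on $\mathrm{span}(\vec v_1)^\perp$. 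Differentiating $r_\mathrm{uni}$ and using the antisymmetry $\sin(\theta_i^*-\theta_j^*)=-\sin(\theta_j^*-\theta_i^*)$ together with $A_{ij}=A_{ji}$ collapses the double sum via $\sum_{i,j}A_{ij}\sin(\theta_i^*-\theta_j^*)(\theta_i'-\theta_j')=2\sum_i\theta_i'\sum_j A_{ij}\sin(\theta_i^*-\theta_j^*)=\frac{2}{K}\,\vec\theta'\cdot\vec\omega$, where the last step again uses the fixed-point relation. Inserting $\vec\theta'=\frac{1}{K}\vec J^{+}\vec\omega$ yields
\be
  \frac{\mathrm{d}r_\mathrm{uni}}{\mathrm{d}K}
   = -\frac{2}{K^2\sum_i k_i}\,\vec\omega^{\mathrm{T}}\vec J^{+}\vec\omega
   = \frac{2}{K^2\sum_i k_i}\sum_{k=2}^N\frac{(\vec v_k\cdot\vec\omega)^2}{|\lambda_k|}\,. \nn
\ee
Since $\lambda_k<0$ for $k\ge 2$ in a stable state, every term is non-negative and the derivative is strictly positive whenever $\vec\omega$ has nonzero projection onto some stable eigenvector, giving the claimed monotonic increase; for $\vec\omega=\vec 0$ the state and $r_\mathrm{uni}$ are $K$-independent, so the statement holds in the weak sense.

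\emph{Main obstacle.} The calculus itself is routine; the delicate points are establishing a smooth stable branch $\vec\theta^*(K)$ with the zero mode $\vec v_1$ consistently quotiented out so that $\vec J^{+}$ is well defined, and controlling the sign of the quadratic form $\vec\omega^{\mathrm{T}}\vec J^{+}\vec\omega$. Both rest entirely on the negative definiteness of $\vec J$ on $\mathrm{span}(\vec v_1)^\perp$ — exactly the property defining a stable phase locked state — so the stability hypothesis is indispensable and is what makes both parts of the theorem hold.
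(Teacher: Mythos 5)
Your proposal is correct and takes essentially the same route as the paper: positivity via the trace identity $r_\mathrm{uni} = -\tr(\vec J)/(K\sum_i k_i) = \sum_{n\ge 2}|\lambda_n|/(K\sum_i k_i)$ (the paper's Lemma~1), and monotonicity via the linear response of the steady state solved with the Moore--Penrose pseudoinverse and the spectral decomposition of the symmetric Jacobian, reproducing exactly the paper's formula Eq.~(\ref{eq:slope-r}) and its handling of the degenerate case $\vec \omega = \vec 0$. Your only deviation is cosmetic: by substituting the steady-state identity $\sum_j A_{i,j}\sin(\theta_j^*-\theta_i^*) = -\omega_i/K$ at the outset you obtain $\vec J\,\mathrm{d}\vec\theta^*/\mathrm{d}K = \vec\omega/K$ directly, shortcutting the paper's intermediate bookkeeping with the difference vectors $\vec q_{(\ell,m)}$, but the resulting computation is the same.
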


\begin{figure*}
\centering
\includegraphics[width = 0.7\textwidth]{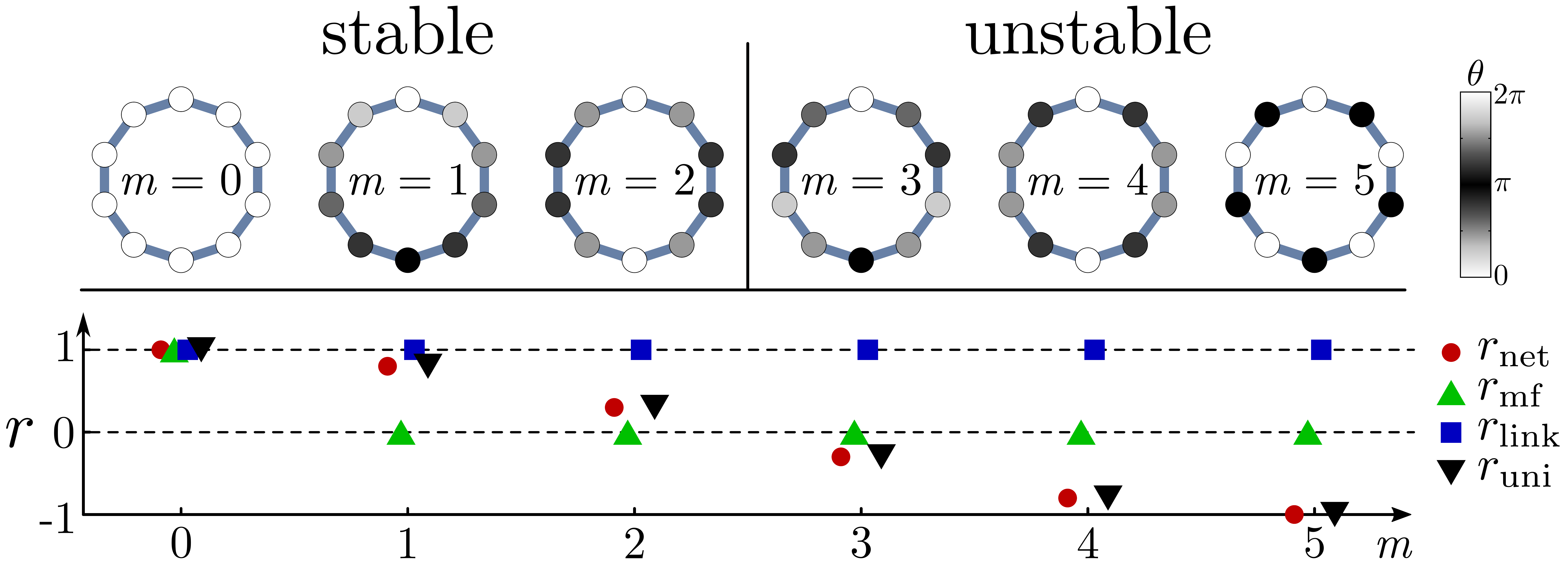}
\caption{\textbf{Order parameters and stability.}
Steady states in a ring network with $N=10$ nodes and the corresponding values of the different order parameters (shifted horizontally for better visibility). The state $m=0$ is the most stable as the phase differences between neighboring nodes are smallest. The phase locked states become more unstable with increasing $m$. $r_\mathrm{mf}$ and $r_\mathrm{link}$ do not provide information about the stability of the steady state, being either zero for most of the states or identical to one for all phase locked states, respectively. Our universal order parameter $r_\mathrm{uni}$ accurately reflects the stability of the different states.}
\label{fig:order_stability}
\end{figure*}

In the remainder of this section we provide the proof for these theorems with the help of two lemmas, relating the order parameter to the eigenvalues of the Jacobian:

\begin{lemma}
Given a network of coupled Kuramoto oscillators Eq.~(\ref{eqn:kuramoto-intro}) with $\sum_i \omega_i = 0$ and $K \ge K_{c2}$ in the stable phase locked state, the order parameter $r_\mathrm{uni}$ Eq.~(\ref{eqn:order_rho}) is given by the negative trace of the Jacobian $\vec J$,
\bea
    r_{\rm uni} &=& - \frac{1}{K \sum_{i=1}^N k_i} \,  \tr({\vec J}) \nn \\
               &=&  - \frac{1}{K \sum_{i=1}^N k_i} \sum_{j=2}^N \lambda_j.
\eea
\end{lemma}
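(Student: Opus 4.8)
My plan is to evaluate the trace of the Jacobian directly from the Kuramoto dynamics and show that it reproduces, up to the prefactor $1/(K\sum_i k_i)$, exactly the double sum defining $r_{\rm uni}$. First I would linearize Eq.~(\ref{eqn:kuramoto-intro}) about the stable phase locked steady state $\vec\theta^*$ and read off the entries of $\vec J$. Differentiating the right-hand side, the off-diagonal entries are $J_{ik} = K A_{i,k}\cos(\theta_k^* - \theta_i^*)$ for $k\neq i$, while each diagonal entry collects the $-1$ coming from the $-\theta_i$ inside every sine term,
\be
   J_{ii} = -K\sum_{j=1}^N A_{i,j}\cos(\theta_j^* - \theta_i^*). \nn
\ee
Only the diagonal contributes to the trace, so summing over $i$ and using the symmetry $A_{i,j}=A_{j,i}$ together with the evenness of the cosine gives $\tr(\vec J) = -K\sum_{i,j} A_{i,j}\cos(\theta_i^* - \theta_j^*)$.

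The second ingredient is that in the fully phase locked regime $K\ge K_{c2}$ the phase differences $\theta_i^* - \theta_j^*$ are time independent. Summing Eq.~(\ref{eqn:kuramoto-intro}) over all nodes and using the antisymmetry of the coupling shows the common locked frequency equals $\frac1N\sum_i\omega_i = 0$, so the $\theta_i^*$ are in fact constant. The long time average in the definition Eq.~(\ref{eqn:order_rho}) therefore acts trivially, $\langle\cos(\theta_i-\theta_j)\rangle_t = \cos(\theta_i^* - \theta_j^*)$, and the double sum in $r_{\rm uni}$ becomes identical to the one appearing in the trace. Dividing by $\sum_i k_i$ and comparing the two expressions yields $r_{\rm uni} = -\tr(\vec J)/(K\sum_i k_i)$, the first claimed equality.

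For the second equality I would use that the trace equals the sum of all eigenvalues, $\tr(\vec J)=\sum_{j=1}^N\lambda_j$, and invoke the trivial eigenvalue $\lambda_1 = 0$ with eigenvector $(1,\dots,1)^{\rm T}$, already noted to arise from the global phase-shift invariance of Eq.~(\ref{eqn:kuramoto-intro}). Dropping this vanishing term leaves $\tr(\vec J)=\sum_{j=2}^N\lambda_j$, which completes the identity.

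I do not expect a genuine analytic obstacle here: the statement is essentially a bookkeeping identity between the diagonal of the Jacobian and the summand of $r_{\rm uni}$. The one point that deserves explicit care is the claim that the off-diagonal coupling entries are irrelevant — it is precisely the diagonal of $\vec J$, and nothing else, that matches $r_{\rm uni}$, so I would state clearly that the trace discards the off-diagonal terms $J_{ik}$ with $k\neq i$. A secondary check worth recording is that the steady state about which we linearize is the \emph{stable} locked state guaranteed for $K\ge K_{c2}$, so that the $\lambda_j$ with $j\ge2$ are exactly the negative eigenvalues introduced above and the time average defining $r_{\rm uni}$ indeed converges to the steady-state value used in the computation.
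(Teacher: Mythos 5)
Your proof is correct and follows essentially the same route as the paper: read off the Jacobian entries $J_{i,k} = K A_{i,k}\cos(\theta_k^* - \theta_i^*)$ and $J_{i,i} = -K\sum_j A_{i,j}\cos(\theta_j^* - \theta_i^*)$, observe that the trace reproduces the double sum defining $r_{\rm uni}$ up to the factor $-K\sum_i k_i$, and drop the trivial eigenvalue $\lambda_1 = 0$ for the second equality. The additional points you spell out---that the locked frequency vanishes because $\sum_i \omega_i = 0$, so the time average in Eq.~(\ref{eqn:order_rho}) acts trivially---are exactly the steps the paper leaves implicit in ``the lemma then follows directly by calculating the trace,'' so no gap remains.
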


\begin{proof}
Explicit calculation of the Jacobian matrix $\vec J$ in Eq.~(\ref{eqn:linear_stability}) yields
\begin{align}
    J_{i,j} &= K A_{i,j}  \cos(\theta_i^* - \theta_j^*)  \qquad \mbox{for} \;  i \neq j, \nn \\
    J_{i,i} &= - K \sum_{j=1}^N A_{i,j} \cos(\theta_i^* - \theta_j^*).
    \label{eqn:def-jacobian}
\end{align}
The lemma then follows directly by calculating the trace. The second equality follows from the fact that the largest eigenvalue of $\vec J$ is $\lambda_1 = 0$.
\end{proof}

Given that the eigenvalues of the Jacobian $\lambda_2, \dots, \lambda_N < 0$ are all negative for a stable phase locked state, $K > K_{c2}$, it immediately follows that the order parameter $r_\mathrm{uni}$ must be positive.

To finish proving the theorems above, we now also relate the derivative $\mathrm{d}r_\mathrm{uni}/\mathrm{d}K$ to the eigenvalues $\lambda_1,\dots,\lambda_N$ of the Jacobian matrix and their corresponding eigenvectors $\vec v_1,\ldots,\vec v_N$:

\begin{lemma}
Given a network of coupled Kuramoto oscillators Eq.~(\ref{eqn:kuramoto-intro}) with $\sum_i \omega_i = 0$ and $K \ge K_{c2}$ the derivative of the order parameter with respect to the coupling strength
is given by 
\bea
   \frac{\mathrm{d} r_{\rm uni}}{\mathrm{d} K} =  
    \frac{2}{K^2 \sum_{i=1}^N k_i} \sum_{n=2}^N  \frac{1}{-\lambda _n}
               (\vec v_n \cdot \vec \omega )^2 \ge 0.
    \label{eq:slope-r}           
\eea 
\end{lemma}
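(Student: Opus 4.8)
The plan is to differentiate $r_{\rm uni}$ directly with respect to $K$, carefully accounting for the fact that the steady-state phases $\theta_i^*$ themselves depend on $K$. Writing $\vec u := \mathrm{d}\vec\theta^*/\mathrm{d}K$ for the sensitivity of the fixed point and differentiating the definition Eq.~(\ref{eqn:order_rho}) gives
\be
  \frac{\mathrm{d} r_{\rm uni}}{\mathrm{d} K} = -\frac{1}{\sum_i k_i} \sum_{i,j=1}^N A_{i,j} \sin(\theta_i^*-\theta_j^*)\,(u_i - u_j). \nn
\ee
Using the symmetry $A_{i,j}=A_{j,i}$ together with the antisymmetry of the sine, the $u_i$ and $u_j$ contributions can be collected so that this sum becomes $2\sum_{i} u_i \sum_j A_{i,j}\sin(\theta_i^*-\theta_j^*)$. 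The inner sum is pinned down by the steady-state condition $\omega_i + K\sum_j A_{i,j}\sin(\theta_j^*-\theta_i^*)=0$, which gives $\sum_j A_{i,j}\sin(\theta_i^*-\theta_j^*)=\omega_i/K$. Hence the derivative collapses to
\be
  \frac{\mathrm{d} r_{\rm uni}}{\mathrm{d} K} = -\frac{2}{K\sum_i k_i}\, \vec u \cdot \vec\omega, \nn
\ee
reducing the whole problem to evaluating the single scalar $\vec u\cdot\vec\omega$. (This is cleaner than differentiating the trace formula of Lemma~1, which would require $\mathrm{d}(\tr\vec J)/\mathrm{d}K$.)

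Second, I would determine $\vec u$ by implicit differentiation of the same steady-state condition with respect to $K$. Differentiating $\omega_i + K\sum_j A_{i,j}\sin(\theta_j^*-\theta_i^*)=0$ and once more eliminating the bare sine term through the steady-state equation, the factors $K A_{i,j}\cos(\theta_i^*-\theta_j^*)$ reassemble exactly into the entries of the Jacobian Eq.~(\ref{eqn:def-jacobian}) (the diagonal arising from the $-u_i$ piece). This shows that $\vec u$ solves the linear system
\be
  \vec J\, \vec u = \frac{1}{K}\,\vec\omega. \nn
\ee

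Third, I would invert this system spectrally. Since $A$ is symmetric, $\vec J$ is real symmetric and its eigenvectors $\vec v_1,\dots,\vec v_N$ may be taken orthonormal, with $\vec v_1\propto(1,\dots,1)^{\mathrm{T}}$. The hypothesis $\sum_i\omega_i=0$ is precisely $\vec\omega\perp\vec v_1$, so $\vec\omega=\sum_{n=2}^N(\vec v_n\cdot\vec\omega)\,\vec v_n$ lies in the range of $\vec J$. Expanding $\vec u$ in the eigenbasis fixes the components along $\vec v_n$ for $n\ge 2$ as $(\vec v_n\cdot\vec\omega)/(K\lambda_n)$, while the component along the zero mode $\vec v_1$ stays free; this free part is exactly the gauge freedom of a global phase shift and drops out of $\vec u\cdot\vec\omega$ because $\vec\omega\perp\vec v_1$. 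One thus gets $\vec u\cdot\vec\omega=\sum_{n=2}^N (\vec v_n\cdot\vec\omega)^2/(K\lambda_n)$, and substituting back reproduces Eq.~(\ref{eq:slope-r}); nonnegativity is immediate since $\lambda_n<0$ for every $n\ge 2$ in the stable phase-locked regime.

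The main obstacle, and the step that deserves the most care, is justifying the existence of $\vec u=\mathrm{d}\vec\theta^*/\mathrm{d}K$ in the first place, i.e.\ the applicability of the implicit function theorem to the steady-state branch. This rests on $\vec J$ being invertible on the subspace orthogonal to $\vec v_1$, which holds precisely because $\lambda_2,\dots,\lambda_N<0$ for the stable fixed point at $K>K_{c2}$; the same nondegeneracy is what makes the spectral inversion above well defined and the result finite (it is exactly the vanishing of $\lambda_2$ at $K_{c2}$ that produces the divergence of Theorem~1). The remaining symmetry bookkeeping -- the index swaps yielding the factor of $2$ and the reassembly of the Jacobian from the $\cos$ terms -- is routine but must be done attentively to keep the signs straight.
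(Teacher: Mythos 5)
Your proposal is correct and takes essentially the same route as the paper's proof: implicit differentiation of the steady-state condition with respect to $K$, use of the steady-state identity $\sum_j A_{i,j}\sin(\theta_i^*-\theta_j^*)=\omega_i/K$, and spectral inversion of the singular Jacobian on the orthogonal complement of its kernel, with $\sum_i\omega_i=0$ guaranteeing solvability and the gauge mode dropping out. The only difference is organizational rather than conceptual --- you apply the steady-state identity at the outset, collapsing the linearized system to $\vec J\,\vec u = \vec\omega/K$ and the derivative to $-\tfrac{2}{K\sum_i k_i}\,\vec u\cdot\vec\omega$, whereas the paper carries the sine sums and the difference vectors $\vec q_{(\ell,m)}$ through the Moore--Penrose pseudo-inverse and simplifies the resulting quadratic form only at the end.
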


\begin{proof}
Consider a global change of the coupling strength $K' = K + \kappa$.  This perturbation induces a small change of the steady state phases of the network, $\theta_m^* \rightarrow \theta'_m = \theta_m^* + \xi_m$.
Expanding the steady state condition 
\bea
    0 = \omega_i + (K+\kappa) \sum_{m = 1}^{N} A_{i,m}  \sin(\theta_m^* + \xi_m - \theta_i^* - \xi_i) \nn
\eea
to leading order in $\kappa$ and the $\xi_m$ yields
\bea
  && 0 = \kappa \sum_{m = 1}^N A_{i,m} \sin(\theta_m^* - \theta_i^*)   
       + \sum_{m =1}^N J_{i,m} \xi_m      \nn \\
   \Rightarrow \, &&   \sum_{m =1}^N J_{i,m} \xi_m  = - \frac{\kappa}{2}
       \sum_{\ell,m=1}^N A_{\ell,m} \sin(\theta^*_m - \theta^*_\ell) (\delta_{i,\ell} - \delta_{i,m})  \nn
   \label{eqn:steady2}
\eea
for all $i \in \left\{ 1,\ldots,N \right\}$ using the definition of the Jacobian Eq.~(\ref{eqn:def-jacobian}) and the Kronecker $\delta$ symbol. In vectorial notation this set of equations can be written as
\be
   \vec J \vec \xi = - \frac{\kappa}{2} \sum_{\ell,m=1}^N A_{\ell,m} \sin(\theta^*_m - \theta^*_\ell)
        \vec q_{(\ell,m)},
    \label{eq:Jxi}    
\ee
where we define the vector $\vec q_{(\ell,m)}$, whose $i$th component is given by $\vec q_{(\ell,m),i} = \delta_{i,\ell} - \delta_{i,m}$. The matrix $\vec J$ is singular, but the vectors $\vec q_{(\ell,m)}$ are orthogonal to its kernel [$\vec v_1 = \left(1,1,\dots,1\right)^T$] such that we can solve equation (\ref{eq:Jxi}) using the Moore-Penrose pseudo-inverse $\vec J^+$. Decomposing $\vec J$ into eigenvalues and eigenstates, we thus obtain 
\bea
   \vec \xi &=& - \frac{\kappa}{2} \sum_{\ell,m=1}^N A_{\ell,m} \sin(\theta^*_m - \theta^*_\ell)
       \vec J^+ \vec q_{(\ell,m)}   \nn \\
    &=& - \frac{\kappa}{2} \sum_{\ell,m=1}^N \sum_{n=2}^N A_{\ell,m} \sin(\theta^*_m - \theta^*_\ell)
         \frac{1}{\lambda_n} (\vec v_n \cdot \vec q_{(\ell,m)}) \vec v_n. \nn
\eea

We then find for the change of the phases 
\bea
    \frac{\mathrm{d}(\theta_j - \theta_i)}{\mathrm{d}K} 
    &=& \vec q_{(j,i)} \cdot \lim_{\kappa \rightarrow 0} 
       \underbrace{\frac{\vec \theta(K+\kappa)-\vec \theta(K)}{\kappa}}_{= \vec \xi/\kappa } \nn \\
    &=& -\frac{1}{2} \sum_{\ell,m=1}^N  A_{\ell,m} \sin(\theta^*_m - \theta^*_\ell) \nn \\
    && \qquad \qquad \times \sum_{n=2}^N \frac{1}{\lambda_n} (\vec q_{(\ell,m)} \cdot \vec v_n) (\vec q_{(j,i)} \cdot \vec v_n). \nn 
\eea
Hence, the derivative of the order parameter is given by
\bea
  &&  \frac{\mathrm{d} r_{\rm uni}}{\mathrm{d} K} =
      \frac{1}{\sum_{i=1}^N k_i}  \sum_{i,j = 1}^N A_{i,j} 
            \frac{d\cos(\theta_i^* - \theta_j^*)}{dK} \nn \\
      && \; =  \frac{1}{\sum_{i=1}^N k_i}  \sum_{i,j = 1}^N A_{i,j} 
            \sin(\theta_i^* - \theta_j^*) \frac{d(\theta_j - \theta_i)}{dK} \nn \\      
   && \; =  \frac{1}{2\sum_{i=1}^N k_i} \sum_{n=2}^N  \frac{1}{-\lambda _n}
               \left[  \sum_{i,j = 1}^N A_{i,j}  \sin(\theta_i^* - \theta_j^*)
                 (\vec q_{(j,i)} \cdot \vec v_n) \right]^2 . \nn
\eea
Now we use the steady state condition to simplify this expression. 
We write $\vec q_{(j,i)} \cdot \vec v_n =  \vec v_{n,j} -  \vec v_{n,i}$, where $\vec v_{n,j}$ denotes the $j$th component of the vector $\vec v_n$ and we obtain
\bea
    && \sum_{i,j = 1}^N A_{i,j}  \sin(\theta_i^* - \theta_j^*) 
             (\vec q_{(j,i)} \cdot \vec v_n) \nn \\
  && \qquad =  \sum_{j = 1}^N \vec v_{n,j} 
            \underbrace{ \sum_{i=1}^N   A_{i,j}  \sin(\theta_i^* - \theta_j^*)  }_{= -\omega_j/K} \nn \\
  && \qquad \qquad \qquad - \sum_{i = 1}^N \vec v_{n,i} 
            \underbrace{ \sum_{i=j}^N  A_{i,j}  \sin(\theta_i^* - \theta_j^*)  }_{= +\omega_i/K}  \nn \\
  && \qquad = -\frac{2}{K} \, \vec v_n \cdot \vec \omega \, .
\eea
The derivative of the order parameter then becomes
\bea
   \frac{\mathrm{d} r_{\rm uni}}{\mathrm{d} K} =  
    \frac{2}{K^2\sum_{i=1}^N k_i} \sum_{n=2}^N  \frac{1}{-\lambda _n}
               (\vec v_n \cdot \vec \omega )^2 , \nn
\eea
finishing the proof of Lemma 2.
\end{proof}

For any stable steady state we have $\lambda_n < 0$ for all
$n \in \left\{2,\ldots, N \right\}$ such that the slope is non-negative. It can become 
zero only if $\vec v_n \cdot \vec \omega = 0$ for all $n \in \left\{2,\ldots, N\right\}$. 
As the eigenvectors form an orthonormal basis this would imply that 
$\vec \omega$ is parallel to $\vec v_1$. As we assume $\sum_j \omega_j = 0$
this is only possible if $\vec \omega = \vec 0$ and we have $\mathrm{d} r_{\rm uni} / \mathrm{d} K > 0$ for $K > K_{c2}$.

Finally, as $K \rightarrow K_{c2}^+$ from above the phase locked state becomes unstable with $\lambda_2 \rightarrow 0$. With the assumption \mbox{$\vec \omega \cdot \vec v_2 \neq 0$} it follows that the derivative diverges, concluding the proofs for both theorems.

\section{Conclusion}

Kuramoto oscillators are the prototypical systems used to study the synchronization behavior of limit cycle oscillators. The order parameters introduced to study this synchronization capture different aspects of the transition to synchrony. None of the order parameters previously suggested for Kuramoto oscillators on complex networks describes all transitions to partial and full phase locking as well as the convergence to full synchrony in arbitrary networks.\\

Here we have proposed a universal order parameter accurately describing the phase coherence in networks of phase oscillators. This order parameter recovers the original Kuramoto order parameter for a fully connected network of oscillators. We have analytically shown that the slope of the order parameter diverges when the fully phase locked state becomes stable, accurately marking this transition even in small networks. For larger coupling strengths a monotonic increase reflects the slow convergence to complete synchrony and directly relates to the stability of the phase locked state, important, for example, for applications to power grid models where a fully phase locked state is required for stable operation.\\

\acknowledgments

We gratefully acknowledge support from the G\"ottingen Graduate School for Neurosciences and Molecular Biosciences (DFG Grant GSC 226/2 to M.S.), 
the Helmholtz Association (grant no.~VH-NG-1025 to D.W.),
the German Federal Ministry of Education and Research 
(BMBF grant no.~03SF0472B and 03SF0472E to M.T. and D.W.),  
and the Max Planck Society to M.T.



\end{document}